\newcommand {\al}   {\alpha}       \newcommand {\bt}  {\beta}
\newcommand {\g }   {\gamma}       
\newcommand {\dl}   {\delta}       \newcommand {\e }  {\epsilon}
\newcommand {\ve}   {\varepsilon}
\newcommand {\s }   {\sigma}       
\newcommand {\vf }  {\varphi}
\newcommand {\pl}   {\partial}     \newcommand {\nb}  {\nabla}
\renewcommand {\det}{{\sf\,det\,}}
       \renewcommand {\lim}{{\sf\,lim\,}}
\newcommand   {\ex}{{\sf\,e}}
     \newcommand   {\diag}{{\sf\,diag\,}}
\newcommand {\MM}  {{\mathbb M}}   
   \newcommand {\MR}  {{\mathbb R}}
   \newcommand {\MT}  {{\mathbb T}}
\newcommand {\MU}  {{\mathbb U}}   
   \newcommand {\MZ}  {{\mathbb Z}}
\newcommand {\CC }  {{\cal C}}
\newcommand {\Sa}  {{\textsc{a}}}   \newcommand {\Sb}  {{\textsc{b}}}
\newcommand {\Sg}  {{\textsc{g}}}   
   \newcommand {\Sl}  {{\textsc{l}}}
   \newcommand {\Sn}  {{\textsc{n}}}
   \newcommand {\Sr}  {{\textsc{r}}}
\newtheorem{lemma}{Lemma}[section]
\newtheorem{prop}{Proposition}[section]
\newtheorem{theorem}{Theorem}[section]
\theoremstyle{definition}
\newtheorem*{cor}{Corollary}
\newtheorem*{com}{Comment}
\begin{document}
\title     {On the existence of the global conformal gauge in string theory}
\author    {M. O. Katanaev
            \thanks{E-mail: katanaev@mi-ras.ru}\\ \\
            \sl Steklov Mathematical Institute,\\
           \sl ul. Gubkina, 8, Moscow, 119991, Russia}

\maketitle
\begin{abstract}
The global conformal gauge is playing the crucial role in string theory
providing the basis for quantization. Its existence for two-dimensional
Lorentzian metric is known locally for a long time. We prove that if a
Lorentzian metric is given on a plain then the conformal gauge exists globally
on the whole $\MR^2$. Moreover, we prove the existence of the conformal gauge
globally on the whole worldsheets represented by infinite strips with straight
boundaries for open and closed bosonic strings. The global existence of the
conformal gauge on the whole plane is also proved for the positive definite
Riemannian metric.
\end{abstract}
\section{Introduction}
The (super)string theory attracts much interest in physics and mathematics for
the last fifty years (see, e.g., \cite{GrScWi87,BriHen88,BarNes90}). It is
usually considered as the basis for construction of the unified quantum theory
of all fundamental interactions including gravity. The crucial role in the
theory is played by the conformal gauge for a metric of Lorentzian signature
in which it is conformally flat. In fact, almost all results in string theory
are obtained using the assumption that the conformal gauge exists on the whole
string worldsheets represented by infinite strips with straight boundaries. For
example, the covariant and light cone quantizations use Fourier series which
exist only if the conformal gauge is applied on the whole string worldsheet.

The local existence of the conformal gauge is well known for a long time
(see, for $\CC^2$-metric, e.g., \cite{Petrov61}, Ch.~I, \S~6.1, or
\cite{Vladim71}, Ch.~I, \S~3.1). This gauge and boundary value problems
were also considered in \cite{VlaVol86R}.  The local existence of the gauge
is proved by writing down equations
for transformation functions and considering their integrability conditions
which guarantee the existence of solution in some neighbourhood of an
arbitrary point. However it is not enough. In string theory, it is assumed that
the conformal gauge exists on an infinite strip with straight boundaries. There
are two subtle questions: does the conformal gauge exist on the whole strip? and
can the boundaries be made straight? In the present paper, we answer these
questions affirmatively. The transition from local to global considerations is
based on the global existence theorem for the solution of the Cauchy problem for
two-dimensional hyperbolic differential equations with varying coefficients
(see, e.g., \cite{Hadama32}, book IV, ch.\ I).
This theorem is highly nontrivial, but allows one to make global statements.

The existence of the global conformal gauge, adopted in string theory, results
in beautiful and consistent theory. Therefore the theory deserves attention by
itself. But the question remains: are there other solutions of the initial
Nambu--Goto string which are not captured by the usual approach? We prove that
there are no such solutions. The main results of the present paper for
Lorentzian signature metric are published in \cite{Katana21B} without proofs.

The local existence of the conformal gauge (isothermal coordinates) for positive
definite Riemannian metric is also known in mathematics for a long time. The
proof for analytic metric is given e.g.\ in \cite{Petrov61}, Ch.~I, \S~6.4 and
\cite{Vladim71}, Ch.~I, \S~3.4 and for $\CC^3$-metric e.g.\ in \cite{Wolf72},
Theorem 2.5.14. In the present paper, we extend the proof to the whole Euclidean
plane.

In the next section, we introduce notation and write down equations of motion
with boundary conditions. Afterwards we consider infinite, open, and closed
strings in subsequent sections, respectively. Finally, we analyze the Riemannian
two-dimensional metric.

All functions are supposed to be sufficiently smooth by default.
\section{The bosonic string}
Consider two manifolds: a plane $\MR^2$ with arbitrary global
coordinates $x=(x^\al):=(x^0,x^1):=(\tau,\s)$, $\al=0,1$, and $D$-dimensional
Minkowskian space $\MR^{1,D-1}$ with Cartesian coordinates $X=(X^\Sa)$,
$\Sa=0,1,\dotsc,D-1$, $D\ge2$, and the Lorentz metric
$\eta_{\Sa\Sb}:=\diag(+-\dotsc-)$. Let there be a smooth embedding
\begin{equation}                                                  \label{ubnchs}
  X:\qquad\MR^2\supset\overline\MU\ni\qquad(\tau,\s)\mapsto
  \big(X^\Sa(\tau,\s)\big)\qquad\in\MR^{1,D-1},
\end{equation}
of some closed subset $\overline\MU$ of a plane. We assume that $\MU$ is
connected and simply connected open subset in $\MR^2$. The embedding defines the
string worldsheet $\MM:=X(\overline\MU)$.

Embedding (\ref{ubnchs}) defines symmetric quadratic form with components
\begin{equation}                                                  \label{enbsty}
  h_{\al\bt}:=\pl_\al X^\Sa\pl_\bt X^\Sb\eta_{\Sa\Sb}
  =\pl_\al X^\Sa\pl_\bt X_\Sa.
\end{equation}
In general, this form may be negative definite, degenerate, or indefinite. We
assume that the embedding is such that
\begin{equation}                                                  \label{edsfwe}
\begin{split}
  (\pl_0 X)^2:=&\dot X^2:=\dot X^\Sa\dot X^\Sb\eta_{\Sa\Sb}>0,
\\
  (\pl_1 X)^2:=&X^{\prime2}:=X^{\prime\Sa}X^{\prime\Sb}\eta_{\Sa\Sb}<0,
\end{split}
\end{equation}
where the dot and prime denote differentiation with respect to $\tau$ and $\s$,
respectively, on $\MU$.
The vectors $\dot X$ and $X'$ are linearly independent on $\MM$. Here and in
what follows indices $\Sa,\Sb,\dotsc$ are often omitted. So global coordinates
$\tau,\s$ on $\MU$ are timelike and spacelike, respectively. Then the
determinant of the induced quadratic form is negative
\begin{equation}                                                  \label{eqjfyh}
  h:=\det h_{\al\bt}=\dot X^2X^{\prime2}-(\dot X,X')^2<0,
\end{equation}
where brackets denote the usual scalar product in $\MR^{1,D-1}$. Now the
embedding (\ref{ubnchs}) defines the Lorentzian metric on the string worldsheet
interior $\MU$ with signature $(+-)$.

Open string is the embedding (\ref{ubnchs}) of the closed straight strip
\begin{equation}                                                  \label{bdvgtr}
  -\infty<\tau<\infty,\qquad 0\le\s\le\pi
\end{equation}
with properties (\ref{edsfwe}). This strip is vertical if $\tau$ and $\s$
coordinate axes are depicted by vertical and horizontal straight lines on a
plain $\MR^2$, respectively.

Closed string is the embedding (\ref{ubnchs}) of the closed straight verical
strip
\begin{equation}                                                  \label{qndbfu}
  -\infty<\tau<\infty,\qquad -\pi\le\s\le\pi
\end{equation}
with identified boundaries. There are many ways to identify smoothly the
boundaries (\ref{qndbfu}). In string theory, we, first, impose the conformal
gauge on the metric on the same strip (\ref{qndbfu}) and, second, impose the
smooth periodicity conditions
\begin{equation}                                                  \label{annfgh}
  \pl^k_1 X^\Sa\big|_{\s=-\pi}=\pl^k_1 X^\Sa\big|_{\s=\pi},\qquad\forall\Sa,
  \forall\tau,\quad k=0,1,2,\dotsc,
\end{equation}
up to the needed order. It is the prime aim of the present paper to prove that
the conformal gauge on the same strips does exist.

Sure, a cylinder is not a simply connected manifold and cannot be covered by a
single coordinate chart. The domain (\ref{qndbfu}) is the fundamental domain for
a closed string worldsheet with identified boundaries.

A coordinate system defined on the domains for open (\ref{bdvgtr}) and for
closed (\ref{qndbfu}) strings we call {\em global coordinate system} on the
string worldsheets.

If infinite strips in the $\tau,\s$ plane have curved boundaries, then all of
them are diffeo\-morphic to strips (\ref{bdvgtr}) or (\ref{qndbfu}). Thus we did
not loose generality by specifying the coordinate range in the $\tau,\s$ plane.

The dynamics of the Nambu--Goto string is governed by the action which is
proportional to the string worldsheet area
\begin{equation}                                                  \label{ubxvgy}
  S_{\Sn\Sg}:=-\int_{\overline\MU}\!\!dx\sqrt{|h|}
  =-\int_{\overline\MU}\!\!d\tau d\s\sqrt{\displaystyle(\dot X,X')^2
  -\dot X^2X^{\prime2}},
\end{equation}
where $h:=\det h_{\al\bt}$. This action is invariant with respect to arbitrary
coordinate changes and global Lorentz transformations. It implies the
Euler--Lagrange equations
\begin{equation}                                                  \label{uvbsju}
  \frac1{\sqrt{|h|}}\frac{\dl S_{\Sn\Sg}}{\dl X_\Sa}=
  \square_{(h)} X^\Sa=h^{\al\bt}\nb_\al\nb_\bt X^\Sa
  =\frac1{\sqrt{|h|}}\pl_\al\left(\sqrt{|h|}h^{\al\bt}\pl_\bt X^\Sa\right)=0,
\end{equation}
where the two-dimensional wave operator $\square_{(h)}$ is build
by the induced metric $h_{\al\bt}$ (\ref{enbsty}) and $\nb_\al$ is the
covariant derivative with respective Christoffel's symbols.

We assume that ends of an open string are free, and then the action
(\ref{ubxvgy}) implies also the boundary conditions
\begin{equation}                                                  \label{ubvbxg}
  s^\bt\pl_\bt X^\Sa\big|_{\s=0,\pi}=0,
\end{equation}
where $s^\al$ are components of the spacelike vector which is perpendicular
to the boundaries with respect to the induced metric.

The action (\ref{ubxvgy}) does not yield any boundary condition for a closed
string. Instead, we have periodicity conditions (\ref{annfgh}) imposed by hands.

In string theory, the crucial role is played by the possibility to impose
global conformal gauge
\begin{equation}                                                  \label{edbfht}
  h_{\al\bt}=\ex^{2\phi}\eta_{\al\bt},\qquad\eta_{\al\bt}:=\diag(+-),
\end{equation}
where $\phi(x)$ is some sufficiently smooth function, on the whole string
worldsheet. The aim of the present paper is to prove that this conformal gauge
can be imposed on the same strips (\ref{bdvgtr}) and (\ref{qndbfu}) both for
open and closed strings with the same straight boundaries.
\section{The idea of the proof}
The idea of the proof is the following. We construct two orthogonal vector
fields: the timelike $t=t^\al\pl_\al$ and spacelike
$s=s^\al\pl_\al$ vector fields such that the following conditions hold on the
whole string worldsheet $\MU\hookrightarrow\MR^{1,D-1}$:
\begin{equation}                                                  \label{axncjh}
  (t,s)=0,\qquad t^2+s^2=0,\qquad t^2>0,\qquad\forall x\in\MU,
\end{equation}
where the scalar product is defined by the induced metric $h_{\al\bt}$
(\ref{enbsty}):
\begin{equation*}
  (t,s):=t^\al s^\bt h_{\al\bt},\qquad t^2:=(t,t),\qquad s^2:=(s,s).
\end{equation*}
Then we find conditions for commutativity of these vector fields: $[t,s]=0$.
The next step is to find two families of integral curves
$x^\al(\tilde\tau,\tilde\s)$ which are defined by the system of
differential equations
\begin{equation}                                                  \label{ancjdy}
  \frac{\pl x^\al}{\pl\tilde\tau}=t^\al,\qquad
  \frac{\pl x^\al}{\pl\tilde\s}=s^\al,
\end{equation}
where $\tilde\tau$ and $\tilde\s$ are parameters along integral curves of vector
fields $t$ and $s$, respectively. The integrability conditions for this system
are fulfilled on the whole $\MU$:
\begin{equation*}
  \frac{\pl^2 x^\al}{\pl\tilde\tau\pl\tilde\s}
  -\frac{\pl^2 x^\al}{\pl\tilde\s\pl\tilde\tau}=\frac{\pl s^\al}{\pl\tilde\tau}
  -\frac{\pl t^\al}{\pl\tilde\s}=t^\bt\pl_\bt s^\al-s^\bt\pl_\bt t^\al
  =[t,s]^\al=0,
\end{equation*}
due to commutativity of vector fields. Consequently, there is a nondegenerate
coordinate transformation $(\tau,\s)\mapsto(\tilde\tau,\tilde\s)$ on the whole
worldsheet $\MU$.

In the new coordinate system, the induced metric $\tilde h_{\al\bt}$ is
conformally flat due to the properties of the vector fields (\ref{axncjh}):
\begin{equation}                                                  \label{ehsdhg}
\begin{split}
  \tilde h_{00}=&h_{\al\bt}\frac{\pl x^\al}{\pl\tilde\tau}
  \frac{\pl x^\bt}{\pl\tilde\tau}=t^2,
\\
  \tilde h_{01}=&h_{\al\bt}\frac{\pl x^\al}{\pl\tilde\tau}
  \frac{\pl x^\bt}{\pl\tilde\s}=(t,s)=0,
\\
  \tilde h_{11}=&h_{\al\bt}\frac{\pl x^\al}{\pl\tilde\s}
  \frac{\pl x^\bt}{\pl\tilde\s}=s^2=-t^2.
\end{split}
\end{equation}
The final step is the analysis of domains of the definition of parameters
$\tilde\tau$ and $\tilde\s$ of integral curves (\ref{ancjdy}) which are new
coordinates.

The last two conditions (\ref{axncjh}) imply the inequality $s^2<0$, i.e.\
the vector field $s$ is necessarily spacelike.

Here and in what follows, $\MU$ denotes either the whole Euclidean plane
(infinite string) or an open set (strip) on the plane
$(\tau,\s)\in\MU\subset\MR^2$ (open or closed string), where the induced metric
is nondegenerate. The boundaries $\pl\MU$ of open string, on which the metric
is degenerate, are considered separately.
\section{Infinite string}
Let us start the detailed analysis. First, we consider the embedding
(\ref{ubnchs}) where $\MU=\MR^2$, i.e.\ the embedding of the whole plane
(infinite string). Arbitrary timelike and spacelike tangent vectors $T$ and $S$
to the string worldsheet in the embedding space $\MR^{1,D-1}$ can be decomposed
on tangent vectors $\dot X$ and $X'$:
\begin{equation}                                                  \label{abnxhh}
\begin{split}
  T=&A(\cosh\vf\dot X+\sinh\vf X'),
\\
  S=&B(\sinh\psi\dot X+\cosh\psi X'),
\end{split}
\end{equation}
where $A(x)\ne0$, $B(x)\ne0$ and $\vf(x),\psi(x)\in\MR$ are some functions.
Suppose that vectors $T$ and $S$ for $\vf=\psi=0$ are directed in the same way
as vectors $\dot X$ and $X'$, respectively. Then $A>0$ and $B>0$.
\begin{lemma}                                                     \label{lsjdhg}
Vector fields $T$ and $S$ on $\MU$ satisfy equalities
\begin{equation}                                                  \label{abcnft}
  (T,S)=0,\qquad T^2+S^2=0,
\end{equation}
if and only if vector field $S$ is given by Eq.(\ref{abnxhh}) with arbitrary
functions $B>0$ and $\psi\in\MR$, and the second vector field has the form
\begin{equation}                                                  \label{anbsgt}
  T=-\frac B{\sqrt{|h|}}\big[\cosh\psi X^{\prime2}+\sinh\psi(\dot X,X')\big]\dot X
  +\frac B{\sqrt{|h|}}\big[\sinh\psi\dot X^2+\cosh\psi(\dot X,X')\big]X'.
\end{equation}
\end{lemma}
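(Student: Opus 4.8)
The plan is to prove the two implications of the equivalence separately, treating the tangent plane at each point as a two-dimensional Lorentzian space spanned by $\dot X$ and $X'$. Throughout I abbreviate the three independent inner products of this frame by $a:=\dot X^2$, $b:=X^{\prime2}$ and $c:=(\dot X,X')$; by hypothesis $a>0$, $b<0$, and by (\ref{eqjfyh}) the determinant is $h=ab-c^2<0$, so that $|h|=c^2-ab>0$ (which is in fact automatic once $a>0$, $b<0$). The two free data of $S$, namely $B>0$ and $\psi\in\MR$, will serve as the parameters in terms of which $T$ is reconstructed.

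For the ``if'' direction I would simply substitute $S$ from (\ref{abnxhh}) and $T$ from (\ref{anbsgt}) into the bilinear form (\ref{enbsty}) and verify the two equalities. The efficient route is to evaluate first the two auxiliary products $(T,\dot X)$ and $(T,X')$: in each the terms combine into the single factor $c^2-ab=|h|$, giving $(T,\dot X)=B\sqrt{|h|}\,\cosh\psi$ and $(T,X')=-B\sqrt{|h|}\,\sinh\psi$. Then $(T,S)=B[\sinh\psi\,(T,\dot X)+\cosh\psi\,(T,X')]$ vanishes identically. For the norm condition I would expand $T^2$ in the frame, factor out $|h|$ by $ab-c^2=-|h|$, and obtain $T^2=-B^2[\sinh^2\psi\,a+2\sinh\psi\cosh\psi\,c+\cosh^2\psi\,b]=-S^2$, i.e. $T^2+S^2=0$.

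For the ``only if'' direction I would write $T=u\dot X+vX'$ with $u:=A\cosh\vf>0$, $v:=A\sinh\vf$ (so $u^2-v^2=A^2>0$), keeping $S=B(\sinh\psi\,\dot X+\cosh\psi\,X')$. Collecting the coefficients of $u$ and $v$, the orthogonality $(T,S)=0$ becomes $uQ+vP=0$, where $P:=\cosh\psi\,b+\sinh\psi\,c$ and $Q:=\sinh\psi\,a+\cosh\psi\,c$; hence $(u,v)=\mu(P,-Q)$ for a single scalar $\mu$, which already forces $T$ to be proportional to (\ref{anbsgt}). Feeding this into $T^2=u^2a+2uvc+v^2b$ and reusing the factorisation above gives $T^2=-\mu^2|h|\,S^2/B^2$, so that $T^2+S^2=0$ collapses (since $S^2\neq0$) to $\mu^2=B^2/|h|$, i.e. $\mu=\pm B/\sqrt{|h|}$.

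The one genuinely delicate step, and the main obstacle, is fixing the sign of $\mu$: I must show that the required orientation $u=A\cosh\vf>0$ selects $\mu=-B/\sqrt{|h|}$, reproducing (\ref{anbsgt}) rather than its negative. Since $u=\mu P$ and $B>0$, this amounts to proving $P<0$, and here I would use that $S$ is spacelike by construction, $S^2<0$. Dividing by $\cosh^2\psi>0$, spacelikeness reads $g(w):=aw^2+2cw+b<0$ with $w:=\tanh\psi\in(-1,1)$, while $P<0$ reads $cw+b<0$. The implication follows by contraposition: if $cw+b\ge0$ then $cw\ge-b>0$, whence $g(w)=aw^2+cw+(cw+b)\ge aw^2+cw>0$, contradicting $g(w)<0$. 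With $P<0$ established, $\mu=-B/\sqrt{|h|}$, and back-substitution yields $u=-\frac{B}{\sqrt{|h|}}P$, $v=\frac{B}{\sqrt{|h|}}Q$, which is exactly the vector (\ref{anbsgt}); the amplitude $A$ and rapidity $\vf$ are then recovered from $u=A\cosh\vf$, $v=A\sinh\vf$, and are unique. Geometrically this is the statement that in a Lorentzian plane the line orthogonal to a spacelike vector is timelike, so $T$ is determined up to length and time-orientation, the former fixed by $T^2=-S^2$ and the latter by $A>0$.
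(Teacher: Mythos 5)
Your proof is correct and follows essentially the same route as the paper: decompose $T$ in the frame $\dot X$, $X'$, use the orthogonality condition to fix the direction of $T$ up to a scalar, and then use $T^2+S^2=0$ to determine that scalar. The only place you go beyond the paper's own argument is the explicit verification that $\cosh\psi\,X^{\prime2}+\sinh\psi\,(\dot X,X')<0$ (which selects $\mu=-B/\sqrt{|h|}$ rather than $+B/\sqrt{|h|}$); the paper fixes this sign only implicitly by asserting that $\tilde A$ is unique, so your contrapositive argument from the spacelikeness of $S$ is a welcome tightening, not a deviation.
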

\begin{proof}
Substitution of Eqs.~(\ref{abnxhh}) into the orthogonality condition
(\ref{abcnft}) yields equation
\begin{equation}                                                  \label{edsfwr}
  \frac{(T,S)}{\cosh\vf\cosh\psi}
  =\tanh\psi\dot X^2+(1+\tanh\vf\tanh\psi)(\dot X,X')
  +\tanh\vf X^{\prime2}=0
\end{equation}
which imply
\begin{equation*}
  \tanh\vf=-\frac{\tanh\psi\dot X^2+(\dot X,X')}{X^{\prime2}
  +\tanh\psi(\dot X,X')}.
\end{equation*}
Then vector field $T$ can be written in the form
\begin{equation*}
  T=-\tilde A\big[X^{\prime2}+\tanh\psi(\dot X,X')\big]\dot X
  +\tilde A\big[\tanh\psi\dot X^2+(\dot X,X')\big]X',
\end{equation*}
where
\begin{equation*}
  A:=\tilde A\sqrt{\big[\tanh\psi\dot X^2+(\dot X,X')\big]^2
  +\big[X^{\prime2}+\tanh\psi(\dot X,X')\big]^2}.
\end{equation*}
Now algebraic equation $T^2+S^2=0$ has the unique solution
\begin{equation*}
  \tilde A=\frac B{\sqrt{|h|}}\cosh\psi.
\end{equation*}
As a result, we obtain solution (\ref{anbsgt}) for arbitrary $S$.
\end{proof}

The similar statement can be formulated considering the vector field $T$ as
independent variable. To this end we have to solve Eq.(\ref{edsfwr}) with
respect to $\psi$ and afterwards find $S$.

Vector fields $T$ and $S$ are defined on the whole string worldsheet and lie
in the tangent space to the Minkowskian space $\MT(\MR^{1,D-1})$. The
differential map of the embedding $\MU\hookrightarrow\MR^{1,D-1}$ acts on
vectors as follows
\begin{equation*}
  \MT(\MU)\ni\quad t=t^\al\pl_\al,~s=s^\al\pl_\al~\mapsto~
  T=t^\al\pl_\al X^\Sa\pl_\Sa,~S=s^\al\pl_\al X^\Sa\pl_\Sa\quad
  \in\MT(\MR^{1,D-1}),
\end{equation*}
where $t$ and $s$ are vector fields on $\MU$. Note the relations
\begin{equation}                                                  \label{abbcnd}
  T^2=t^2\qquad S^2=s^2,\qquad (T,S)=(t,s),
\end{equation}
which follow from the definition of the induced metric. Comparing the above
formulae with Eqs.(\ref{abnxhh}) and (\ref{anbsgt}) allows us to define
vector fields on $\MU$:
\begin{equation}                                                  \label{abcndg}
\begin{split}
  t=&-\frac B{\sqrt{|h|}}\big[\cosh\psi X^{\prime2}+\sinh\psi(\dot X,X')\big]\pl_0
  +\frac B{\sqrt{|h|}}\big[\sinh\psi\dot X^2+\cosh\psi(\dot X,X')\big]\pl_1.
\\
  s=&~~B\sinh\psi\pl_0+B\cosh\psi\pl_1.
\end{split}
\end{equation}
These vectors can be easily rewritten in the form
\begin{equation}                                                  \label{avsfrd}
  t=\ve^{\al\bt}s_\bt\pl_\al,\qquad s=s^\al\pl_\al,
\end{equation}
where $\ve^{\al\bt}$ is the totally antisymmetric second rank tensor,
$\ve^{01}=-1/\sqrt{|h|}$, and components $s^\al$ are arbitrary. It immediately
implies equalities (\ref{axncjh}). There is a one-to-one correspondence between
vector components (\ref{avsfrd})
\begin{equation}                                                  \label{abcndj}
  t^\al=\ve^{\al\bt}s_\bt\quad\Leftrightarrow\quad s_\al=\ve_{\al\bt}t^\bt.
\end{equation}
That is we can take either $t^\al$ or $s^\al$ as independent variables. Now we
have to find the condition of their commutativity.
\begin{lemma}                                                     \label{lkwioi}
Vector fields $t$ and $s$ on $\MU$ related by equalities (\ref{abcndj}) commute
if and only if
\begin{equation}                                                  \label{amvnfu}
  t_\al=\frac{\pl_\al\chi}{\pl\chi^2},\qquad
  \pl\chi^2:=h^{\al\bt}\pl_\al\chi\pl_\bt\chi>0,
\end{equation}
where $\chi$ is a nontrivial solution of the wave equation
\begin{equation}                                                  \label{amnfht}
  \Box_{(h)}\chi:=h^{\al\bt}\nb_\al\nb_\bt\chi=0,
\end{equation}
satisfying $\pl\chi^2>0$.

For any nontrivial solution of Eq.(\ref{amnfht}) satisfying the condition
$\pl\chi^2>0$, vector fields (\ref{abcndj}) and (\ref{amvnfu}) commute.
\end{lemma}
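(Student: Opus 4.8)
The plan is to decompose the commutator in the frame $\{t,s\}$ and reduce the equation $[t,s]=0$ to two scalar conditions that I can recognise as the vanishing of the divergence and of the curl of the one-form $t_\al/t^2$. Since $(t,s)=0$, $t^2>0$ and $s^2=-t^2$ by (\ref{axncjh}), the pair $t,s$ forms a pointwise basis of the tangent space on $\MU$, and every vector decomposes as $W=\frac{(W,t)}{t^2}\,t-\frac{(W,s)}{t^2}\,s$. Applying this to $W=[t,s]$, commutativity is equivalent to the two scalar equations $(t,[t,s])=0$ and $(s,[t,s])=0$, so the whole problem is to compute these two projections.

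First I would record the algebraic consequences of $s_\al=\ve_{\al\bt}t^\bt$ in two dimensions: $\ve^{\al\bt}t_\bt=s^\al$, $\ve_{\al\bt}s^\bt=t_\al$, the completeness relation $h^{\al\bt}=(t^\al t^\bt-s^\al s^\bt)/t^2$, and $\ve^{\al\bt}=-(t^\al s^\bt-t^\bt s^\al)/t^2$. Then, writing $[t,s]^\al=t^\bt\nb_\bt s^\al-s^\bt\nb_\bt t^\al$ (the Christoffel terms cancel), I would use the orthogonality $t^\al s_\al=0$ and $s^2=-t^2$ to trade derivatives of $s$ for derivatives of $t$. I expect the two projections to collapse to the compact master formulas
\begin{equation*}
  (t,[t,s])=(t^2)^2\,\ve^{\al\bt}\nb_\al\!\left(\frac{t_\bt}{t^2}\right),\qquad
  (s,[t,s])=(t^2)^2\,\nb_\al\!\left(\frac{t^\al}{t^2}\right).
\end{equation*}
Establishing these two identities is the technical heart of the argument, and the main obstacle is the index bookkeeping: one must reduce the frame contractions $t^\bt s^\al\nb_\bt t_\al$ and $s^\al s^\bt\nb_\al t_\bt$ to the trace ($\div$) and antisymmetric ($\rot$) parts of $\nb_\al t_\bt$, which is where the four identities above are used. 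Because $t^2\neq0$ and $t,s$ are independent, the master formulas immediately yield
\begin{equation*}
  [t,s]=0\quad\Longleftrightarrow\quad
  \ve^{\al\bt}\nb_\al\!\left(\frac{t_\bt}{t^2}\right)=0
  \ \text{ and }\ \nb_\al\!\left(\frac{t^\al}{t^2}\right)=0.
\end{equation*}

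Finally I would interpret the two conditions. The first states that the one-form $\om:=(t_\al/t^2)\,dx^\al$ is closed; since $\MU$ is simply connected, the Poincar\'e lemma produces a function $\chi$ with $\pl_\al\chi=t_\al/t^2$. Then $\pl\chi^2=h^{\al\bt}\pl_\al\chi\,\pl_\bt\chi=t^2/(t^2)^2=1/t^2>0$, whence $t_\al=t^2\pl_\al\chi=\pl_\al\chi/\pl\chi^2$, which is exactly (\ref{amvnfu}). Under this identification $t^\al/t^2=\pl^\al\chi$, so the second condition becomes $\nb_\al\pl^\al\chi=\Box_{(h)}\chi=0$, the wave equation (\ref{amnfht}). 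Conversely, if $\chi$ is any nontrivial solution of $\Box_{(h)}\chi=0$ with $\pl\chi^2>0$ and $t_\al$ is defined by (\ref{amvnfu}), then $t_\al/t^2=\pl_\al\chi$ is exact, hence its curl vanishes, while its divergence equals $\Box_{(h)}\chi=0$; the master formulas then force $[t,s]=0$. This simultaneously proves the equivalence and the concluding assertion that every such solution $\chi$ gives commuting fields.
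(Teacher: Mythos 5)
Your proposal is correct and follows essentially the same route as the paper: both reduce $[t,s]=0$ to the vanishing of the curl and the divergence of $t_\al/t^2$ (the paper extracts the curl condition by contracting its covector form of the commutator with $s^\al$ and recovers the divergence condition by substituting $t_\al=\pl_\al\chi/\pl\chi^2$ back into the full commutator equation), then invokes the Poincar\'e lemma on the simply connected $\MU$ and identifies the remaining condition with the wave equation. Your two master formulas for the projections $(t,[t,s])$ and $(s,[t,s])$ check out, so what is left is only the routine index computation you already flagged.
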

Note that for different nontrivial solutions of the wave equation (\ref{amnfht})
the pairs of vector fields $t$, $s$ differ in general.
\begin{proof}
The equality $[t,s]=0$ together with conditions (\ref{abcndj}) is equivalent
to the system of equations
\begin{equation}                                                  \label{awjjuy}
  s_\al\nb_\bt s^\bt-\frac12\nb_\al s^2-s^\bt\nb_\bt s_\al=0.
\end{equation}
This equation is equivalent to the same equation for $t$:
\begin{equation}                                                  \label{alvnfu}
  t_\al\nb_\bt t^\bt-\frac12\nb_\al t^2-t^\bt\nb_\bt t_\al=0.
\end{equation}
Contraction of Eq.(\ref{awjjuy}) with $s^\al$ yields
\begin{equation}                                                  \label{ansjyg}
  \nb_\al\left(\frac{s^\al}{s^2}\right)=0\qquad\Leftrightarrow\qquad
  \ve^{\al\bt}\nb_\al\left(\frac{t_\bt}{t^2}\right)=0.
\end{equation}
Since Christoffel's symbols are symmetric, the covariant derivatives in the
last equation can be replaced by the partial ones. Then, due to the Poincar\'e
lemma, there exists such function $\chi$ that
\begin{equation*}
  \frac{t_\al}{t^2}=\pl_\al\chi\qquad\Rightarrow\qquad t^2=\frac1{\pl\chi^2}
\end{equation*}
on arbitrary connected and simply connected domain $\MU$, in particular, on the
whole plane $\MU=\MR^2$.

We require $\pl\chi^2>0$ because vector field $t$ on $\MU$ must be timelike,
$t^2>0$. Consequently, representation (\ref{amvnfu}) is valid for components
$t_\al$. Substitution of equality (\ref{amvnfu}) into Eq.(\ref{alvnfu}) yields
\begin{equation}                                                  \label{abcvdf}
  \frac{\pl_\al\chi\,\square_{(h)}\chi}{\pl\chi^2}=0.
\end{equation}
Since $(\pl_\al\chi)\ne0$ and $\pl\chi^2>0$, we obtain Eq.(\ref{amnfht}) for the
unknown function $\chi$.

It is clear that for any nontrivial solution of the wave equation (\ref{amnfht})
the vector fields $t$ and $s$ exist, commute, and have properties
(\ref{axncjh}).
\end{proof}

Thus commuting vector fields $t$ and $s$ with properties (\ref{axncjh}) have
generally the following form
\begin{equation}                                                  \label{andytr}
  t=\frac{h^{\al\bt}\pl_\bt\chi}{\pl\chi^2}\pl_\al,\qquad
  s=\frac{\ve^{\al\bt}\pl_\bt\chi}{\pl\chi^2}\pl_\al,
\end{equation}
where $\chi$ is an arbitrary solution of the wave equation (\ref{amnfht}) such
that $\pl\chi^2>0$. That is we have described the total arbitrariness existing
in vector fields if Eq.(\ref{amnfht}) has many solutions.

\begin{com}
We did not use in Lemma \ref{lkwioi} the fact that vector fields $t$ and $s$
were obtained by the embedding $\MU\hookrightarrow\MR^{1,D-1}$. It is sufficient
to consider two vector fields related by Eq.~(\ref{avsfrd}). The properties
(\ref{axncjh}) are easily verified without embedding.
\qed\end{com}

Suppose that the determinant of the induced metric $h_{\al\bt}$ is nonzero on
the whole plane $(\tau,\s)\in\MR^2$ and separated from $0$ and $\pm\infty$ at
infinity:
\begin{equation}                                                  \label{abnshg}
  0<\e\le\underset{\tau^2+\s^2\to\infty}\lim|\det h_{\al\bt}|\le M<\infty,
\end{equation}
where $\e$ and $M$ are some constants. It is well known that the Cauchy problem
for the hyperbolic equation (\ref{amnfht}) has unique solution $\chi$ on the
whole plain, if the Cauchy data are given on a spacelike curve, say, $\tau=0$
(see, e.g.~\cite{Hadama32}, book IV, ch.\ I).
It is easily verified that there exist such Cauchy data that the inequality
$\pl\chi^2>0$ holds everywhere. This implies that nontrivial solution of the
wave equation  (\ref{amnfht}) exists on the whole plane $\MR^2$. There are many
such solutions, and they are parameterized by the Cauchy data.

Thus the vector fields $s$ and $t$ are given on the whole plane $\MR^2$. The
inequality (\ref{abnshg}) implies that component $t^0$ is separated from
zero and bounded on the plane including infinity. Thus Eqs.(\ref{ancjdy}) imply
\begin{equation*}
  \frac{\pl\tau}{\pl{\tilde\tau}}=t^0\qquad\Rightarrow\qquad
  \tilde\tau\sim\int^\infty\frac{d\tau}{t^0}.
\end{equation*}
The last integral is divergent and thus the coordinate $\tilde\tau$ runs over
the whole real line $\MR$. Similar statement is valid for the space coordinate
$\tilde\s$. Consequently, new coordinates cover the whole plane
$(\tilde\tau,\tilde\s)\in\MR^2$.

Thus we proved
\begin{theorem}                                                   \label{tdgefr}
Let an arbitrary metric $h_{\al\bt}$ of Lorentzian signature be given on the
whole plane $\MR^2$. Let it be nondegenerate at infinity (\ref{abnshg}). Then
there exists a surjective diffeomorphism on the plane
\begin{equation}                                                  \label{ehgdrv}
  \MR^2\ni\quad (x^\al)\mapsto\big(\tilde x^\al(x)\big)\quad\in\MR^2
\end{equation}
such that metric $h_{\al\bt}$ in new coordinate system has conformally flat form
\begin{equation}                                                  \label{ehhhgd}
  \tilde h_{\al\bt}:=h_{\g\dl}\frac{\pl x^\g}{\pl\tilde x^\al}
  \frac{\pl x^\dl}{\pl\tilde x^\bt}=\ex^{2\phi}\eta_{\al\bt},
\end{equation}
where $\phi(\tilde x)$ is some function on $\MR^2$ separated from $\pm\infty$
at infinity $\tilde\tau^2+\tilde\s^2\to\infty$.
\end{theorem}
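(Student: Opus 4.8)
The plan is to assemble the ingredients established above into a single global construction. The starting point is the wave equation (\ref{amnfht}). By the global existence theorem for the Cauchy problem of a two-dimensional hyperbolic equation with variable coefficients, prescribing Cauchy data for $\chi$ on the spacelike line $\tau=0$ produces a solution $\chi$ on all of $\MR^2$; the nondegeneracy hypothesis (\ref{abnshg}) is precisely what keeps the coefficients $h^{\al\bt}$ of the operator under control at infinity, so that the solution neither degenerates nor blows up. I would choose the Cauchy data so that the open condition $\pl\chi^2>0$ holds along the initial curve and propagates to the whole plane, which is compatible with data prescribed on a spacelike curve.

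With such a $\chi$ in hand, the converse direction of Lemma \ref{lkwioi} furnishes a pair of commuting vector fields $t$ and $s$ through the formulas (\ref{andytr}), satisfying the orthogonality and norm conditions (\ref{axncjh}) everywhere on $\MR^2$. Since $[t,s]=0$, the system (\ref{ancjdy}) is integrable on the simply connected domain $\MU=\MR^2$, and integrating the two commuting flows yields new coordinates $(\tilde\tau,\tilde\s)$. In these coordinates the elementary computation (\ref{ehsdhg}) shows at once that the metric takes the conformally flat form $\tilde h_{\al\bt}=t^2\eta_{\al\bt}$, so (\ref{ehhhgd}) holds with $\ex^{2\phi}:=t^2$, and the map (\ref{ehgdrv}) is a local diffeomorphism because $t$ and $s$ are linearly independent.

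The main obstacle is to promote this to a \emph{surjective} diffeomorphism of the entire plane rather than a merely local one, i.e.\ to control the domain swept out by the flow parameters $\tilde\tau,\tilde\s$. Here I would use the estimate (\ref{abnshg}) a second time: it forces the component $t^0$ to be bounded away from $0$ and from $\infty$, so that the flow parameter obeys $\tilde\tau\sim\int^\infty d\tau/t^0$, a divergent integral; hence $\tilde\tau$ sweeps out all of $\MR$ as $\tau$ ranges over $\MR$, and the identical argument applied to the spacelike flow shows that $\tilde\s$ ranges over all of $\MR$. This completeness of both commuting flows is exactly what upgrades local integrability to a global diffeomorphism onto $\MR^2$. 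Finally, since $\ex^{2\phi}=t^2$ is bounded above and below by the same nondegeneracy bound (\ref{abnshg}), the conformal factor $\phi$ stays separated from $\pm\infty$ at infinity, which gives the concluding assertion of the theorem.
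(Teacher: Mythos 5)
Your proposal is correct and follows essentially the same route as the paper: global solvability of the Cauchy problem for the wave equation (\ref{amnfht}) with data on $\tau=0$ chosen so that $\pl\chi^2>0$, construction of the commuting fields $t,s$ via Lemma \ref{lkwioi} and (\ref{andytr}), integration of (\ref{ancjdy}) on the simply connected plane, and the divergence of $\int^\infty d\tau/t^0$ guaranteed by (\ref{abnshg}) to show the new coordinates exhaust $\MR^2$. The only additions are explicit remarks (local invertibility from linear independence of $t,s$, and boundedness of $\ex^{2\phi}=t^2$) that the paper leaves implicit.
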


In contrast to the local theorem (see, e.g., \cite{Petrov61,Vladim71})
stating the existence of the conformal gauge only in some neighborhood of every
point, the above theorem is global in a sense that it provides the existing of
the conformal gauge for a Lorentzian metric given on the whole plane
$x\in\MR^2$.

If metric $h_{\al\bt}$ and scalar field $\chi$ are given, then the
diffeomorphism (\ref{ehgdrv}) is defined uniquely up to shifts of coordinates
$\tilde\tau$ and $\tilde\s$ \big(constants of integration of
Eqs.(\ref{ancjdy})\big).

\begin{cor}
Let
\begin{equation}                                                  \label{anvbft}
  \widetilde\MU_0:=\lbrace (\tilde\tau,\tilde\s)\in\MR^2:\quad
  \tilde\s\in[\tilde\s_1,\tilde\s_2],~\tilde\tau\in\MR\rbrace
\end{equation}
be closed vertical strip with straight boundaries on the plane of new
coordinates $\tilde\tau,\tilde\s$ and assumptions of theorem \ref{tdgefr} hold.
Then there exists diffeomorphism (\ref{ehgdrv}) of a closed domain
$(\tau,\s)\in\overline\MU\subset\MR^2$ bounded by integral curves
$x(\tilde\tau,\tilde\s_{1,2})$:
\begin{equation*}
  \frac{\pl x(\tilde\tau,\tilde\s_{1,2})}{\pl\tilde\tau}=t_{1,2},
\end{equation*}
where $t_{1,2}$ are inverse images of vector fields $\pl/\pl\tilde\tau$ on the
boundaries of $\widetilde\MU_0$.
\qed\end{cor}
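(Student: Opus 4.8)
The plan is to realize the asserted domain as the preimage of the strip $\widetilde\MU_0$ under the global diffeomorphism supplied by Theorem \ref{tdgefr}, and then to recognize its two boundary components as integral curves of the timelike field $t$. First I would invoke Theorem \ref{tdgefr}: under the stated hypotheses there is a surjective diffeomorphism $\Phi:\MR^2\to\MR^2$, $x\mapsto\tilde x(x)$, given by (\ref{ehgdrv}) and built by integrating the commuting vector fields $t,s$ of (\ref{andytr}) through the system (\ref{ancjdy}). Being a diffeomorphism of the plane, $\Phi$ is in particular a homeomorphism, so its inverse $\Phi^{-1}$ is continuous and smooth. I would then set $\overline\MU:=\Phi^{-1}(\widetilde\MU_0)$. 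Since $\widetilde\MU_0$ is closed and connected and $\Phi^{-1}$ is a homeomorphism, $\overline\MU$ is a closed connected subset of $\MR^2$, and the restriction $\Phi|_{\overline\MU}:\overline\MU\to\widetilde\MU_0$ is again a diffeomorphism onto its image; this is precisely the diffeomorphism (\ref{ehgdrv}) restricted to $\overline\MU$.

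Next I would identify the boundary. Because $\Phi^{-1}$ is a homeomorphism, $\pl\overline\MU=\Phi^{-1}(\pl\widetilde\MU_0)$, and by (\ref{anvbft}) the set $\pl\widetilde\MU_0$ consists of the two vertical straight lines $\tilde\s=\tilde\s_1$ and $\tilde\s=\tilde\s_2$ with $\tilde\tau\in\MR$. Fixing $\tilde\s=\tilde\s_i$ and letting $\tilde\tau$ vary, the preimage curve is $\tilde\tau\mapsto x(\tilde\tau,\tilde\s_i)$; by the first of equations (\ref{ancjdy}) its tangent vector is $\pl x^\al/\pl\tilde\tau=t^\al$, so each of the two components of $\pl\overline\MU$ is an integral curve of the timelike field $t$. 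Writing $t_{1,2}$ for the restriction of $t$ to these curves — equivalently, the inverse image under $\Phi$ of the coordinate field $\pl/\pl\tilde\tau$ along $\tilde\s=\tilde\s_{1,2}$, since (\ref{ancjdy}) says exactly $\Phi$ carries $t$ to $\pl/\pl\tilde\tau$ — reproduces the boundary description in the statement.

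The computation is routine; the only point requiring care is the boundary identification, and even that is immediate once one observes that the coordinate lines $\tilde\s=\const$ are, by the very construction of the new coordinates via (\ref{ancjdy}), the integral curves of $t$. Since $\tilde\tau$ ranges over all of $\MR$ on each such line, as established in the proof of Theorem \ref{tdgefr}, the set $\overline\MU$ is an infinite domain bounded on its two sides by the curves $x(\tilde\tau,\tilde\s_{1,2})$ and nowhere else, so no genuine obstacle arises and the corollary follows directly from the theorem.
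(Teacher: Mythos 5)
Your proposal is correct and follows essentially the same route as the paper: the paper's proof is the one-line observation that a diffeomorphism maps interior points to interior points and boundary points to boundary points, which is exactly your identification of $\overline\MU$ as the preimage $\Phi^{-1}(\widetilde\MU_0)$ with $\pl\overline\MU=\Phi^{-1}(\pl\widetilde\MU_0)$. Your additional remark that the pulled-back boundary lines $\tilde\s=\tilde\s_{1,2}$ are integral curves of $t$ by the first equation of (\ref{ancjdy}) is the correct (and in the paper implicit) justification of the boundary description.
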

\begin{proof}
Internal and boundary points are mapped under diffeomorphism into internal and
boundary points, respectively.
\end{proof}

To clarify the arbitrariness in coordinates $\tilde\tau$, $\tilde\s$ defined by
the function $\chi$ we consider

{\bf Example 1}. Let the induced metric be conformally flat:
\begin{equation}                                                  \label{abdnft}
  h_{\al\bt}dx^\al dx^\bt=\ex^{2\phi}(d\tau^2-d\s^2)
  =\ex^{2\phi}d\xi d\eta,\qquad\phi=\phi(x),
\end{equation}
where light cone coordinates $\xi:=\tau+\s$, $\eta:=\tau-\s$ are introduced.
Then the wave equation (\ref{amnfht}) is reduced to the flat d'Alembert equation
\begin{equation*}
  \Box_{(h)}\chi=(\pl^2_0-\pl^2_1)\chi=0.
\end{equation*}
Its general solution is given by two arbitrary sufficiently smooth functions
\begin{equation*}
  \chi=F(\xi)+G(\eta).
\end{equation*}
We choose only the functions which satisfy inequality
\begin{equation*}
  \pl\chi^2=4\ex^{-2\phi}F'G'>0\qquad\Rightarrow\qquad F'G'>0,
\end{equation*}
where prime denotes differentiation by the corresponding argument. Then the
metric takes the form
\begin{equation*}
  \tilde h_{\al\bt}d\tilde x^\al d\tilde x^\bt
  =\frac{\ex^{2\phi}}{4F'G'}d\tilde\xi d\tilde\eta
\end{equation*}
in new coordinates $\tilde\xi:=\tilde\tau+\tilde\s$,
$\tilde\eta:=\tilde\tau-\tilde\s$. This metric corresponds to the conformal
transformation
\begin{equation*}
  \tilde\xi:=2F(\xi),\qquad\tilde\eta:=2G(\eta).
\end{equation*}
Thus the arbitrariness in definition of the vector fields described in Lemma
\ref{lkwioi} corresponds to conformal transformations on the string worldsheet.
\qed

Let us find the image of vector fields $t$ and $s$ (\ref{andytr}) under the
coordinate transformation $x^\al\mapsto\tilde x^\al(x)$ where
$(\tilde x^\al)=(\tilde\tau,\tilde\s)$. The definition of new coordinates
(\ref{ancjdy}) implies the expression for the inverse Jacobi matrix
\begin{equation}                                                  \label{ancbdi}
  J^{-1}{}_\al{}^\bt=\frac{\pl x^\bt}{\pl \tilde x^\al}
  =\begin{pmatrix} t^0 & t^1 \\ s^0 & s^1 \end{pmatrix}.
\end{equation}
This expression yields the Jacobian of the coordinate transformation
\begin{equation*}
  J:=\det J_\al{}^\bt=(t^0s^1-t^1s^0)^{-1}=\sqrt{|h|}\pl\chi^2,
\end{equation*}
where representation (\ref{andytr}) is used, and the Jacobi matrix
\begin{equation}                                                  \label{andmui}
  J_\al{}^\bt:=\frac{\pl\tilde x^\bt}{\pl x^\al}
  =\sqrt{|h|}\pl\chi^2\begin{pmatrix} s^1 & -t^1 \\ -s^0 & t^0 \end{pmatrix}.
\end{equation}
Vector fields are transformed under the differential of the map:
\begin{equation*}
  t^\al\pl_\al\mapsto \tilde t^\al\tilde\pl_\al
  :=t^\bt\pl_\bt\tilde x^\al\tilde\pl_\al=\pl_{\tilde\tau},\qquad
  s^\al\pl_\al\mapsto \tilde s^\al\tilde\pl_\al
  :=s^\bt\pl_\bt\tilde x^\al\tilde\pl_\al=\pl_{\tilde\s}.
\end{equation*}
It means that integral curves of the vector fields $t$ and $s$ are perpendicular
straight lines on the plane $(\tilde\tau,\tilde\s)\in\MR^2$.

In the target Minkowskian space $\MR^{1,D-1}$ the vector fields $t$ and $s$
have the form
\begin{equation}                                                  \label{abvxfd}
\begin{split}
  T=&\frac{h^{\al\bt}\pl_\bt\chi}{\pl\chi^2}\pl_\al X
  =\frac{h^{0\bt}\pl_\bt\chi}{\pl\chi^2}\dot X
  +\frac{h^{1\bt}\pl_\bt\chi}{\pl\chi^2}X',
\\
  S=&\frac{\ve^{\al\bt}\pl_\bt\chi}{\pl\chi^2}\pl_\al X
  =\frac{\ve^{0\bt}\pl_\bt\chi}{\pl\chi^2}\dot X
  +\frac{\ve^{1\bt}\pl_\bt\chi}{\pl\chi^2}X'.
\end{split}
\end{equation}

It was already mentioned, that the wave equation (\ref{amnfht}) has nontrivial
solutions $\chi$, $\pl\chi^2\ne0$, on the whole plane $(\tau,\s)\in\MR^2$. The
wave equation has the same form in new coordinates $\tilde\tau,\tilde\s$ as on
the Minkowskian plane. Therefore a general solution of the wave equation
(\ref{amnfht}) is
\begin{equation}                                                  \label{asvdbg}
  \chi=F(\tilde\xi)+G(\tilde\eta)=F(\tilde\tau+\tilde\s)+G(\tilde\tau-\tilde\s),
  \qquad(\tilde\xi,\tilde\eta)\in\MR^2,
\end{equation}
where $F$ and $G$ are two arbitrary sufficiently smooth functions of one
argument. It implies equalities:
\begin{equation*}
\begin{split}
  \pl_0\chi:=\frac{\pl\chi}{\pl\tau}=&\sqrt{|h|}\pl\chi^2\left[(s^1-t^1)F'
  +(s^1+t^1)G'\right],
\\
  \pl_1\chi:=\frac{\pl\chi}{\pl\s}=&\sqrt{|h|}\pl\chi^2\left[(t^0-s^0)F'
  +(t^0+s^0)G'\right],
\end{split}
\end{equation*}
were Jacobi's matrices (\ref{andmui}) are used. Then the components of vectors
$\hat t$ and $\hat s$ corresponding to solution (\ref{asvdbg}) are
\begin{equation}                                                  \label{avbcgd}
\begin{split}
  \hat t^0=&-\frac1{\sqrt{|h|}}\left[X^{\prime2}(s^1-t^1)F'
  +X^{\prime2}(s^1+t^1)G'+(\dot X,X')(s^0-t^0)F'+(\dot X,X')(s^0+t^0)G'\right],
\\
  \hat t^1=&~~\frac1{\sqrt{|h|}}\left[(\dot X,X')(s^1-t^1)F'
  +(\dot X,X')(s^1+t^1)G'+\dot X^2(s^0-t^0)F'+\dot X^2(s^0+t^0)G'\right],
\\
  \hat s^0=&~~(s^0-t^0)F'+(s^0+t^0)G',
\\
  \hat s^1=&~~(s^1-t^1)F'+(s^1+t^1)G'.
\end{split}
\end{equation}
The components on the left hand side are marked by the hat because vectors
$\hat t$ and $\hat s$ differ in general from the ones which were used for the
transformation of the wave equation to the flat form. However for some functions
$F$ and $G$ we must get identities.
\begin{prop}                                                      \label{pkfjuy}
Equalities
\begin{equation*}
  \hat t=t,\qquad\hat s=s
\end{equation*}
become identities if and only if arbitrary functions in solution (\ref{asvdbg})
are linear: $F'=G'=1/2$.
\end{prop}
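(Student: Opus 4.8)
The plan is to argue directly from the explicit component expressions (\ref{avbcgd}), splitting the biconditional into its two implications. The \emph{sufficiency} ($F'=G'=1/2\Rightarrow\hat t=t,\ \hat s=s$) will be a one-line substitution, while the \emph{necessity} will be reduced to a nondegenerate $2\times2$ linear system whose determinant is controlled by the Jacobian (\ref{ancbdi}); the nonvanishing of that Jacobian is the only nontrivial input, and is where I expect the argument to actually bite.

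For sufficiency I would set $F'=G'=1/2$ in (\ref{avbcgd}). In each bracket the two terms carrying $F'$ and $G'$ combine symmetrically, e.g. $(s^0-t^0)F'+(s^0+t^0)G'=\frac12(s^0-t^0)+\frac12(s^0+t^0)=s^0$, so at once $\hat s^0=s^0$ and $\hat s^1=s^1$. The same collapse applied to $\hat t^0$ leaves $\hat t^0=-\frac1{\sqrt{|h|}}\big[X^{\prime2}s^1+(\dot X,X')s^0\big]$; substituting $s^0=B\sinh\psi$, $s^1=B\cosh\psi$ from (\ref{abcndg}) this is exactly the expression for $t^0$ recorded there, and likewise $\hat t^1=t^1$. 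Hence all four identities hold.

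For necessity, assume $\hat s=s$. The two equations $\hat s^0=s^0$ and $\hat s^1=s^1$ rearrange into the homogeneous system $s^0(F'+G'-1)+t^0(G'-F')=0$ and $s^1(F'+G'-1)+t^1(G'-F')=0$, linear in the pair $(F'+G'-1,\ G'-F')$ with coefficient matrix $\left(\begin{smallmatrix} s^0 & t^0\\ s^1 & t^1\end{smallmatrix}\right)$. Its determinant equals $s^0t^1-t^0s^1=-(t^0s^1-t^1s^0)=-1/J=-1/(\sqrt{|h|}\,\pl\chi^2)$, which is nonzero because $\pl\chi^2>0$; thus the matrix is invertible at every point and the only solution is $F'+G'=1$ together with $F'=G'$. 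Since $F'$ depends only on $\tilde\xi=\tilde\tau+\tilde\s$ and $G'$ only on $\tilde\eta=\tilde\tau-\tilde\s$, which vary independently over the plane, the equality $F'=G'$ forces both to be one and the same constant, and then $F'+G'=1$ pins that constant to $1/2$. Note that the $\hat t=t$ equations are not needed in this direction: necessity already follows from $\hat s=s$ alone, after which the first paragraph supplies $\hat t=t$ for free.

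The consistency check I would keep in mind is geometric: the original $\chi$ generating $t,s$ satisfies $\pl_{\tilde\tau}\chi=1$ and $\pl_{\tilde\s}\chi=0$ in the new coordinates, i.e. $\chi=\tilde\tau+\const$, which is precisely the decomposition $F(\tilde\xi)+G(\tilde\eta)$ with $F'=G'=1/2$. So the proposition is just the statement that this representative is the unique $(F,G)$ reproducing the given vector fields, which is exactly the trivial (identity) conformal reparametrization singled out in Example~1. The main obstacle, such as it is, is merely to recognise that the apparently overdetermined four-equation system is consistent, and that the clean invertibility argument rests on the Jacobian being separated from zero — a fact already guaranteed by the standing assumption $\pl\chi^2>0$.
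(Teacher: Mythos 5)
Your proof is correct and is precisely the ``straightforward calculation'' that the paper's one-line proof alludes to: substituting $F'=G'=1/2$ into (\ref{avbcgd}) for sufficiency, and for necessity inverting the $2\times2$ system in $(F'+G'-1,\,G'-F')$ whose determinant $s^0t^1-t^0s^1=-1/(\sqrt{|h|}\,\pl\chi^2)$ is nonzero by the standing assumption $\pl\chi^2>0$. No gaps; the observation that $\hat s=s$ alone already forces $F'=G'=1/2$ is a nice economy.
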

\begin{proof}
Straightforward calculations.
\end{proof}
Thus for
\begin{equation}                                                  \label{amsdhg}
  \chi=\frac12\tilde\xi+\frac12\tilde\eta=\tilde\tau,
\end{equation}
equalities (\ref{andytr}) become identities. The vector fields
$\pl_{\tilde\tau}$ and $\pl_{\tilde\s}$ are obtained from the fields $t$ and $s$
by the differential of the map $(\tau,\s)\mapsto(\tilde\tau,\tilde\s)$ only for
this solution of the wave equation.

Thus to find the diffeomorphism (\ref{ehgdrv}) in explicit form for a given
metric $h_{\al\bt}$, we have to (i) find a nontrivial solution of the wave
equation (\ref{amnfht}), (ii) construct the vector fields $t$ and $s$ using
Eqs.(\ref{amvnfu}), (\ref{abcndj}), and (iii) find a general solution of the
system of equations (\ref{ancjdy}). We have proved that this problem does have
many solutions (the whole arbitrariness is contained in the choice of nontrivial
solution of the wave equation).
\section{Open string}
Now we consider an open string whose worldsheet $\overline\MU$ is an infinite
strip on the plane $(\tau,\s)\in\MR^2$ with two, probably, curved boundaries:
the left $\g_\Sl$ and right $\g_\Sr$ boundaries. The induced metric on the
boundaries is degenerate, and results of the previous section must be revised.
First, we assume that metric is not degenerate and return to this problem later.

If the metric is nondegenerate on $\overline\MU$ including the boundaries, then
we continue it on the whole plain in some sufficiently smooth manner. As the
consequence of theorem \ref{tdgefr} there is a diffeomorphism (\ref{ehgdrv})
after which the metric becomes conformally flat. The problem is that the
boundaries $\g_{\Sl,\Sr}$ on the plain $\tilde\tau,\tilde\s$ may be not
straight vertical lines. However there are residual diffeomorphisms in the
form of conformal maps of $\tilde\tau,\tilde\s$ coordinates. We now show that it
is enough to straighten the strip.

Remember that we do not consider shifts of the plain
$\tilde\xi,\tilde\eta\in\MR^2$ as a whole which preserve the conformal form of
the metric but is not a conformal map.

Let boundary equations after diffeomorphism
$(\tau,\s)\mapsto(\tilde\tau,\tilde\s)$ be (see Fig.~\ref{confmapsst})
 \begin{equation}                                                 \label{eaffqj}
   \g_\Sl:\quad\tilde\eta=\tilde\eta_\Sl(\tilde\xi),\qquad
   \g_\Sr:\quad\tilde\eta=\tilde\eta_\Sr(\tilde\xi),\qquad
   \tilde\xi\in\MR,
\end{equation}
where functions $\tilde\eta_{\Sl,\Sr}\in\CC^1(\MR)$ have properties:
\begin{equation*}
  \tilde\eta_\Sl>\tilde\eta_\Sr,\qquad
  0<\e\le\frac{d\tilde\eta_{\Sl,\Sr}}{d\tilde\xi}\le M<\infty,\qquad\e,M\in\MR
\end{equation*}
for all $\tilde\xi\in\MR$ including infinite points.
\begin{figure}[hbt]
\hfill\includegraphics[width=\textwidth]{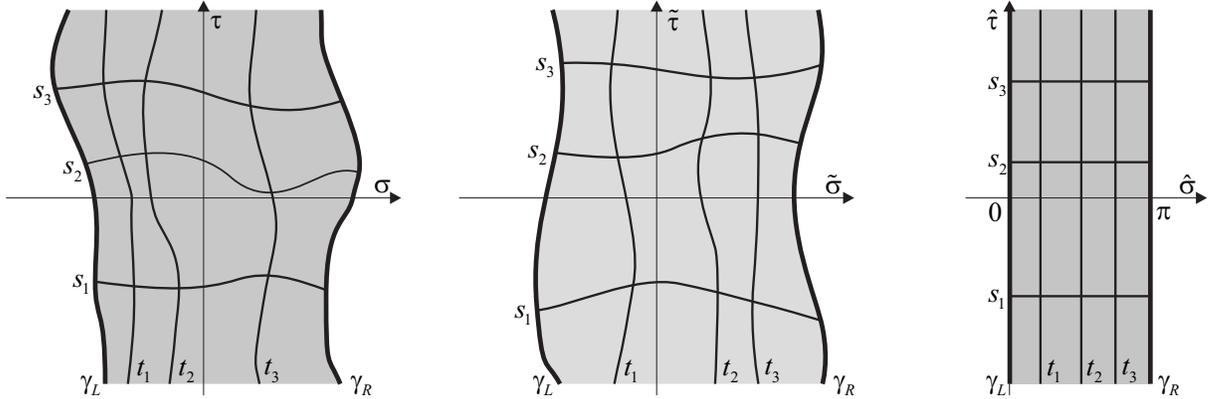}
\hfill {}
\centering\caption{Open string worldsheet in different coordinates $(\tau,\s)$,
$(\tilde\tau,\tilde\s)$, and $(\hat\tau,\hat\s)$. Three integral curves are
shown both for timelike $t$ and spacelike $s$ vector fields.}
\label{confmapsst}
\end{figure}

\begin{theorem}                                                   \label{thhdgy}
The conformal transformation
\begin{equation*}
  \hat\xi=F(\tilde\xi),\qquad\hat\eta=G(\tilde\eta),\qquad F,G\in\CC^1(\MR),
\end{equation*}
such that the boundaries (\ref{eaffqj}) of an open string worldsheet become
straight vertical lines
\begin{equation}                                                  \label{ajhdgt}
   \g_\Sl:\quad\hat\eta=\hat\xi,\qquad
   \g_\Sr:\quad\hat\eta=\hat\xi-2\pi,\qquad\hat\xi\in\MR
\end{equation}
on the plain $\hat\xi,\hat\eta\in\MR^2$ exists.
\end{theorem}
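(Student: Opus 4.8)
The plan is to reduce the geometric requirement to a single functional conjugacy equation and then solve it by a fundamental-domain construction. A point $\big(\tilde\xi,\tilde\eta_\Sl(\tilde\xi)\big)$ of $\g_\Sl$ is carried by the conformal map to $\big(F(\tilde\xi),G(\tilde\eta_\Sl(\tilde\xi))\big)$, and requiring it to lie on $\hat\eta=\hat\xi$ gives
\[
  G\big(\tilde\eta_\Sl(\tilde\xi)\big)=F(\tilde\xi).
\]
In the same way the condition that $\g_\Sr$ be sent onto $\hat\eta=\hat\xi-2\pi$ reads $G\big(\tilde\eta_\Sr(\tilde\xi)\big)=F(\tilde\xi)-2\pi$. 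The first equation merely defines $F$ once $G$ is known, so subtracting the two reduces the whole problem to the single requirement
\[
  G\big(\tilde\eta_\Sl(\tilde\xi)\big)-G\big(\tilde\eta_\Sr(\tilde\xi)\big)=2\pi,\qquad\tilde\xi\in\MR.
\]

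First I would change the independent variable to $u:=\tilde\eta_\Sr(\tilde\xi)$. Since $\tilde\eta_\Sr$ is a $\CC^1$ diffeomorphism of $\MR$ (its derivative lies in $[\e,M]$), this is admissible and the displayed condition becomes the conjugacy equation
\[
  G\big(\Phi(u)\big)=G(u)+2\pi,\qquad \Phi:=\tilde\eta_\Sl\circ\tilde\eta_\Sr^{-1}.
\]
The map $\Phi$ is a $\CC^1$ diffeomorphism of $\MR$, and the hypotheses $\tilde\eta_\Sl>\tilde\eta_\Sr$ and $\e\le\tilde\eta_{\Sl,\Sr}'\le M$ give $\Phi(u)>u$ for all $u$ together with $\Phi'(u)=\tilde\eta_\Sl'/\tilde\eta_\Sr'\in[\e/M,M/\e]$. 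Hence $\Phi$ has no fixed point, $\Phi^{n}(u)\to\pm\infty$ as $n\to\pm\infty$, and for a fixed base point $a$ every $u$ lies in exactly one interval $[\Phi^{n}(a),\Phi^{n+1}(a))$.

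The heart of the proof is the construction of $G$. Setting $b:=\Phi(a)$, I would prescribe $G$ on the fundamental interval $[a,b]$ to be any strictly increasing $\CC^1$ function with $G(a)=0$, $G(b)=2\pi$, and endpoint derivatives $G'(a)=1$, $G'(b)=1/\Phi'(a)$; the latter value is exactly the one forced by differentiating the conjugacy equation at $u=a$. I would then extend $G$ to the whole line by $G\big(\Phi^{n}(v)\big):=G(v)+2\pi n$ for $v\in[a,b)$, $n\in\MZ$. By construction $G$ satisfies the conjugacy equation, is strictly increasing, and ranges over all of $\MR$. The only point needing verification is the $\CC^1$-gluing at the break points $\Phi^{n}(a)$: on the next interval $G'(u)=G'\big(\Phi^{-1}(u)\big)\,(\Phi^{-1})'(u)$, so at $u=b$ the right derivative equals $G'(a)/\Phi'(a)=1/\Phi'(a)$, matching the prescribed left derivative; the uniform bounds on $\Phi'$ keep the construction well behaved at infinity.

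Finally I would set $F:=G\circ\tilde\eta_\Sl$, again a $\CC^1$ diffeomorphism of $\MR$; the first functional equation holds by definition and the second follows from the conjugacy equation. The pair $(\hat\xi,\hat\eta)=\big(F(\tilde\xi),G(\tilde\eta)\big)$ is a conformal transformation, since a light-cone factorized change of this form preserves the conformally flat metric (cf.\ Example~1); it is a global diffeomorphism because $F$ and $G$ are, and it carries $\g_\Sl$, $\g_\Sr$ onto the straight vertical lines (\ref{ajhdgt}), i.e.\ onto $\hat\s=0$ and $\hat\s=\pi$. The main obstacle is precisely the fundamental-domain construction of $G$: existence of a solution of the conjugacy equation is classical for fixed-point-free increasing diffeomorphisms, but one must check that the interpolation on $[a,b]$ can be chosen simultaneously strictly monotone, derivative-compatible, and of total increment $2\pi$, so that the resulting global $G$ is a genuine $\CC^1$ diffeomorphism rather than merely a solution of the functional equation.
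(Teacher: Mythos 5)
Your proof is correct and follows essentially the same route as the paper's: both reduce the problem to an Abel-type functional equation for a fixed-point-free increasing $\CC^1$ map and solve it by prescribing the unknown function on a fundamental domain with derivative matching at the endpoints. The only difference is bookkeeping --- the paper first straightens $\g_\Sl$ and then solves $F\big(\check\eta_\Sr(\check\xi)\big)=F(\check\xi)-2\pi$ for the residual function, while you combine both boundary conditions at once into the conjugacy equation for $G$ with respect to $\Phi=\tilde\eta_\Sl\circ\tilde\eta_\Sr^{-1}$ and recover $F=G\circ\tilde\eta_\Sl$; these are the same construction up to conjugation.
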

\begin{proof}
Let us straighten first the left boundary by conformal transformation
\begin{equation*}
  \check\xi=F(\tilde\xi),\qquad\check\eta=G(\tilde\eta).
\end{equation*}
It is necessary and sufficient to fulfill the condition
\begin{equation*}
 F(\tilde\xi)=G\big(\tilde\eta_\Sl(\tilde\xi)\big),\qquad\forall\tilde\xi\in\MR,
\end{equation*}
in order that the left boundary to be vertical straight line going through the
origin. This equation uniquely defines the function $G$ for a given $F$
because the function $\tilde\eta_\Sl$ is strictly monotonic, the function $F$
being arbitrary.

Now we consider the right boundary. After straightening the left boundary, we
are left with the coordinate transformation
\begin{equation*}
  \hat\xi=F(\check\xi),\qquad\hat\eta=F(\check\eta),
\end{equation*}
which does not change the left boundary and is described by one arbitrary
function $F$. It is necessary and sufficient to satisfy the functional equation
\begin{equation}                                                  \label{abcvfr}
  F\big(\check\eta_\Sr(\check\xi)\big)=F(\check\xi)-2\pi,\qquad
  \forall\check\xi\in\MR,
\end{equation}
after which the right boundary becomes the vertical straight line going through
the point $(\hat\tau,\hat\s)=(0,\pi)$. This functional equation for $F$ has many
solutions. Indeed, the map
\begin{equation*}
  f:\quad\MR\ni\qquad\check\xi\mapsto\check\eta_\Sr(\check\xi)\qquad\in\MR
\end{equation*}
is a bijective map of real lines, and $\check\eta_\Sr(\check\xi)<\check\xi$ for
all $\check\xi\in\MR$. Therefore the sufficiently smooth cyclic group
$\lbrace f^k,~k\in\MZ\rbrace$ is defined. Consequently, the function $F$ with
property $F'>0$ can be arbitrary defined on the fundamental domain, say,
$[\check\eta_\Sr(0),0]$ and then extended on the whole real line using
Eq.(\ref{abcvfr}). If the function satisfies Eq.(\ref{abcvfr}) at the ends of
the fundamental domain then we obtain continuous function on $\MR$ but its
derivative may be discontinuous. The function $F$ must be $\CC^1$ in order to
define the conformal transformation. To avoid possible discontinuities in
derivatives, we differentiate Eq.(\ref{abcvfr}):
\begin{equation*}
  F'\big(\check\eta_\Sr(\check\xi)\big)\frac{d\check\eta_\Sr}{d\check\xi}
  =F'(\check\xi).
\end{equation*}
Since $d\check\eta/d\check\xi\ge\e>0$, then we define arbitrary the derivative
$F'>0$ on the fundamental domain $[\check\eta_\Sr(0),0]$ such that equation
\begin{equation*}
  F'\big(\check\eta_\Sr(0)\big)\left.\frac{d\check\eta_\Sr}{d\check\xi}
  \right|_{\check\eta_\Sr(0)}=F'(0)
\end{equation*}
holds at the ends, and continue it on the real line. Then $F$ is the primitive
of $F'$ with the constant of integration defined by Eq.(\ref{abcvfr}).
\end{proof}

So, if the metric is not degenerate on the boundaries of an open string
worldsheet, then there exists such global $\CC^1$ coordinate transformation
that the transformed metric is conformally flat (\ref{ehhhgd}) on the whole
vertical strip with straight boundaries $\tilde\s=0$ and $\tilde\s=\pi$. This
statement follows from theorems \ref{tdgefr} and \ref{thhdgy} because the
conformal transformation is a diffeomorphism, and diffeomorphisms form a group.

Now we discuss an open Nambu--Goto string for which the induced metric on the
boundaries is degenerate due to boundary conditions. Let us parameterize metric
$h_{\al\bt}$ by its determinant $-\varrho^4$ and ``unimodular metric''
$k_{\al\bt}$:
\begin{equation}                                                  \label{abcndf}
  h_{\al\bt}:=\varrho^2k_{\al\bt},\qquad \det k_{\al\bt}:=-1,\qquad\varrho\ge0,
\end{equation}
separating its determinant explicitly. For nondegenerate metric $h_{\al\bt}$,
the inverse transformation is
\begin{equation*}
  \varrho=|\det h_{\al\bt}|^{1/4},\qquad k_{\al\bt}=\varrho^{-2}h_{\al\bt},
  \qquad\varrho>0.
\end{equation*}
Definition (\ref{abcndf}) implies that the variable $\varrho$ is the scalar
density of degree $-1/2$ and unimodular metric is the second rank tensor density
of degree 1.

Note that the unimodular metric $k_{\al\bt}$ is additionally multiplied by the
Jacobian under arbitrary transformation of coordinates because it is a tensor
density of degree 1. It means that the induced and unimodular metrics take the
conformally flat form simultaneously.

The boundary condition (\ref{ubvbxg}) has the form
\begin{equation*}
  n^0\dot X^\Sa+n^1X^{\prime\Sa}=0\qquad\Rightarrow\qquad
  X^{\prime\Sa}=-\frac{n^0}{n^1}\dot X^\Sa,
\end{equation*}
because the normal vector to the boundaries $(n^\al)=(n^0,n^1)$ must be
spacelike and consequently $n^1\ne0$. As the consequence, the metric degenerates
on the boundaries:
\begin{equation*}
  \det h_{\al\bt}=-\rho^4=\dot X^2 X^{\prime2}-(\dot X,X')^2\to0.
\end{equation*}
Therefore vector fields $t$ and $s$ (\ref{axncjh}) become null
\begin{equation*}
  t^2=\rho^2k_{\al\bt}t^\al t^\bt\to0,\qquad
  s^2=\rho^2k_{\al\bt}s^\al s^\bt\to0
\end{equation*}
at the ends of the string with respect to the induced metric but not with
respect to the unimodular metric, as we shall see later.

Now we construct new vector fields for the unimodular metric $k_{\al\bt}$
satisfying relations
\begin{equation}                                                  \label{avsbsf}
  (t,s):=k_{\al\bt}t^\al s^\bt=0,\qquad t^2+s^2=0,
\end{equation}
where
\begin{equation*}
  t^2:=k_{\al\bt}t^\al t^\bt\qquad s^2:=k_{\al\bt}s^\al s^\bt.
\end{equation*}
Equalities (\ref{avsbsf}) are equivalent to original equations (\ref{axncjh}) in
internal points of $\MU$ and extended on boundaries $\pl\MU$ by continuity.
It is clear that properties (\ref{avsbsf}) do not contradict Eqs.(\ref{axncjh}).
Now we prove that new vector fields $t$ and $s$ exist and coincide with the
original ones for $h_{\al\bt}$.

Formulae (\ref{ehsdhg}) for metric components after the coordinate
transformation have the same form. In addition,
\begin{equation*}
  \tilde h_{00}=-\tilde h_{11}=\varrho^2k_{\al\bt}t^\al t^\bt,\qquad
  \tilde h_{01}=0.
\end{equation*}

Equation (\ref{abcvdf}) in new variables does not depend on $\rho$ and
therefore function $\chi$ must satisfy equation
\begin{equation}                                                  \label{ennngh}
  \square_{(k)}\chi:=\pl_\al\big(k^{\al\bt}\pl_\bt\chi\big)=0.
\end{equation}
This wave equation has many solutions on the whole plain $\MR^2$ because depends
on nondegenerate unimodular metric $k_{\al\bt}$. It implies that vector fields
$t$ and $s$ exist and do not depend on $\varrho$:
\begin{equation}                                                  \label{eersew}
  t^\al=\frac{k^{\al\bt}\pl_\bt\chi}{k^{\g\dl}\pl_\g\chi\pl_\dl\chi}\qquad
  s^\al=\frac{\hat\ve^{\al\bt}\pl_\bt\chi}{k^{\g\dl}\pl_\g\chi\pl_\dl\chi},
\end{equation}
where
\begin{equation*}
  \hat\ve^{\al\bt}=\rho^2\ve^{\al\bt}=
  \begin{pmatrix} 0 & -1 \\ 1 & ~~0 \end{pmatrix}
\end{equation*}
is the totally antisymmetric tensor density of degree $-1$.

One can easily verify that in new variables the Euler--Lagrange equations for
bosonic string (\ref{uvbsju}) take the form
\begin{equation}                                                  \label{avdbcf}
  \sqrt{|h|}\square_{(h)} X^\Sa=\pl_\al\big(\sqrt{|h|}h^{\al\bt}
  \pl_\bt X^\Sa\big)=\pl_\al\big(k^{\al\bt}\pl_\bt X^\Sa\big)=0,
\end{equation}
i.e.\ do not depend on $\varrho$. They must be solved with the boundary
condition
\begin{equation}                                                  \label{avdbck}
  n^\al\pl_\al X^\Sa\big|_{\g_{\Sl,\Sr}}=0,
\end{equation}
which does not depend on $\rho$ too.

Consequently, the problem is reduced to solution of Eqs.(\ref{avdbcf}) with
boundary conditions (\ref{avdbck}) for an open Nambu--Goto string. To make the
transformation of coordinates $(\tau,\s)\mapsto(\tilde\tau,\tilde\s)$ we have to
find the unimodular metric $k_{\al\bt}$ for a given metric $h_{\al\bt}$, choose
a solution $\chi$ of the wave equation (\ref{ennngh}) satisfying the condition
$k^{\g\dl}\pl_\g\chi\pl_\dl\chi>0$, construct the vector fields $t$ and $s$
using formulae (\ref{eersew}), and, finally, integrate Eqs.(\ref{ancjdy}).
Therefore the corollary of theorem \ref{tdgefr} is valid also for an open
string. If needed, after solution of this problem, one can compute the
conformal factor for the induced metric (\ref{ehhhgd}) using equation
\begin{equation}                                                  \label{abcvdj}
  \ex^{2\phi}=\varrho^2k_{\al\bt}t^\al t^\bt.
\end{equation}
Sure, it is zero on the boundaries because $\rho\to0$. Thus we proved the
existence of the global conformal gauge for an open string.
\section{Closed string}
In the initial coordinates $\tau,\s\in\MR^2$, the fundamental domain of a closed
string worldsheet
is given by an infinite strip with timelike boundaries which are identified. The
identification can be performed in many ways and therefore requires definition.
Here we describe the method adopted in string theory.

We showed in the previous section that there is the global diffeomorphism
$(\tau,\s)\mapsto(\hat\tau,\hat\s)$ which maps an arbitrary infinite strip with
timelike boundaries on the vertical strip with straight boundaries where metric
becomes conformally flat. The same procedure can be performed for the
fundamental domain of a closed string. Without loss of generality, we assume
that boundaries go through points $\hat\s=\pm\pi$, as is usually supposed in
string theory. Then the boundary identification is written as the periodicity
condition for every value of the timelike coordinate $\hat\tau$:
\begin{equation}                                                  \label{abbcvf}
  \left.\frac{\pl^k X^\Sa}{\pl\hat\s^k}\right|_{\hat\s=-\pi}=
  \left.\frac{\pl^k X^\Sa}{\pl\hat\s^k}\right|_{\hat\s=\pi},\qquad
  \forall\Sa,\quad\forall\hat\tau,\quad k=0,1,2,\dotsc.
\end{equation}
That is we continuously glue the coordinate functions themselves and their
derivatives up to the needed order. In the initial coordinate system this
condition is written in the covariant form
\begin{equation}                                                  \label{ancbgf}
  \nb_s^k X^\Sa\big|_{\hat\s=-\pi}=\nb_s^k X^\Sa\big|_{\hat\s=\pi},
\end{equation}
where $\nb_s:=s^\al\nb_\al$ is the covariant derivative for the Levi--Civita
connection along the vector field $s$ which is the pullback of the vector field
$\pl/\pl\hat\s$ under the diffeomorphism $(\tau,\s)\mapsto(\hat\tau,\hat\s)$.
The properties of vector fields $t,s$ (\ref{axncjh}) imply that the
covariant derivatives are taken along normal vectors to the boundaries.
However it is not clear at the beginning for which value of $\tau$ on the left
and right the identification takes place, because we have to find the
diffeomorphism $(\tau,\s)\mapsto(\hat\tau,\hat\s)$ explicitly. Simply speaking
we firstly transform the metric to conformally flat form and afterwards perform
the natural gluing.
\section{The Euclidean signature}
Many of the preceeding statements do not depend on the signature of the metric,
and the results can be generalized for an arbitrary Riemannian positive definite
metric. In this section, we suppose that sufficiently smooth two-dimensional
Riemannian metric $h_{\al\bt}$, $\det h_{\al\bt}>0$, is given on the whole plane
$\MR^2$. It is not necessarily induced by some embedding.

Let us consider two vector fields $t=t^\al\pl_\al$ and $s=s^\al\pl_\al$,
related by equation
\begin{equation}                                                  \label{anbsgf}
  t^\al=\ve^{\al\bt}s_\bt,\qquad (t^\al)\ne0,
\end{equation}
where $\ve^{\al\bt}:=h^{\al\g}h^{\bt\dl}\ve_{\g\dl}$ is the totally
antisymmetric second rank tensor, $\ve_{12}:=\sqrt{\det h_{\al\bt}}$. They have
the properties
\begin{equation}                                                  \label{abbnsd}
  t^2-s^2=0,\qquad (t,s)=0,
\end{equation}
where the scalar product is defined by $h_{\al\bt}$. The changing of the metric
signature results in changing of one sign in the first equation in
Eqs.~(\ref{abbnsd}) as compared to Eq.~(\ref{axncjh}).

After the coordinate transformation $\tau,\s\mapsto\tilde\tau,\tilde\s$ defined
by Eqs.~(\ref{ancjdy}) the metric becomes conformally flat:
\begin{equation*}
  ds^2=h_{\al\bt}\frac{\pl x^\al}{\pl\tilde x^\g}
  \frac{\pl x^\bt}{\pl\tilde x^\dl}d\tilde x^\g d\tilde x^\dl
  =t^2d\tilde\tau^2+s^2d\tilde\s^2=t^2(d\tilde\tau+d\tilde\s^2),
\end{equation*}
where $t^2:=h_{\al\bt}t^\al t^\bt$.
\begin{lemma}
Vector fields $t$ and $s$ on $\MU$ related by Eq.~(\ref{anbsgf}) commute if and
only if
\begin{equation}                                                  \label{amvnft}
  t_\al=\frac{\pl_\al\chi}{\pl\chi^2},
\end{equation}
where $\chi$ is a nontrivial solution of the Laplace--Beltrami equation
\begin{equation}                                                  \label{amnfhp}
  \triangle_{(h)}\chi:=h^{\al\bt}\nb_\al\nb_\bt\chi=0.
\end{equation}

For any nontrivial solution of Eq.(\ref{amnfhp}) vector fields (\ref{anbsgf})
and (\ref{amvnft}) commute.
\end{lemma}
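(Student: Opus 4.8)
The plan is to mirror the Lorentzian argument of Lemma \ref{lkwioi}, tracking the single sign change induced by the Euclidean signature. First I would write out the commutativity condition $[t,s]=0$ for the pair related by Eq.~(\ref{anbsgf}). Exactly as in the Lorentzian case, the bracket can be expressed covariantly (Christoffel symbols being symmetric, $\pl_\al$ may replace $\nb_\al$ in the final step), yielding an equation for $s$ of the same structural form as Eq.~(\ref{awjjuy}). The only difference is the sign in the relation $t^2-s^2=0$ versus $t^2+s^2=0$, together with $\ve^{\al\bt}$ now being built from the positive-definite $h^{\al\g}$; I expect this to flip exactly one sign and otherwise leave the manipulation intact.

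Next I would contract the commutativity equation with $s^\al$. In the Lorentzian proof this produced $\nb_\al(s^\al/s^2)=0$, equivalently $\ve^{\al\bt}\nb_\al(t_\bt/t^2)=0$. The same contraction here gives the Euclidean analogue $\nb_\al(s^\al/s^2)=0$, which via the correspondence (\ref{anbsgf}) is equivalent to $\ve^{\al\bt}\pl_\al(t_\bt/t^2)=0$. Since this is the vanishing of an exact-up-to-curl quantity on a simply connected domain, the Poincar\'e lemma supplies a function $\chi$ with $t_\al/t^2=\pl_\al\chi$, whence $t^2=1/\pl\chi^2$ and representation (\ref{amvnft}) follows. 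Here I note the structural contrast with the Lorentzian statement: there one had to impose $\pl\chi^2>0$ to keep $t$ timelike, but for a positive-definite metric every nonzero gradient automatically satisfies $\pl\chi^2>0$, so no such side condition is needed and \emph{every} nontrivial $\chi$ qualifies.

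For the converse direction and the determination of the equation $\chi$ must satisfy, I would substitute (\ref{amvnft}) back into the commutativity equation written for $t$ (the analogue of Eq.~(\ref{alvnfu})). This produces a term proportional to $\pl_\al\chi\,\triangle_{(h)}\chi/\pl\chi^2$, exactly paralleling Eq.~(\ref{abcvdf}); since $\pl_\al\chi\ne0$ and $\pl\chi^2>0$, it reduces to the Laplace--Beltrami equation (\ref{amnfhp}). Conversely, for any nontrivial solution of (\ref{amnfhp}) one defines $t$ and $s$ by (\ref{amvnft}) and (\ref{anbsgf}) and checks directly that they commute and satisfy (\ref{abbnsd}).

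The genuinely new point, and the only place the Euclidean signature matters, is verifying that the sign changes propagate consistently: the identity $t^2-s^2=0$ (rather than $t^2+s^2=0$) must be compatible with the relation $t^\al=\ve^{\al\bt}s_\bt$ built from the Riemannian $\ve^{\al\bt}$. I expect this to be the step requiring care, since a careless sign would turn the elliptic Laplace--Beltrami operator into a hyperbolic one; but it is a short bookkeeping check rather than a conceptual obstacle. Everything else transcribes verbatim from the proof of Lemma \ref{lkwioi}, the essential conceptual simplification being that ellipticity removes the need to track the timelike/spacelike character of the fields and hence removes the constraint $\pl\chi^2>0$ from the hypotheses.
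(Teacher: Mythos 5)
Your proposal is correct and follows essentially the same route as the paper, whose proof simply states that it repeats the argument of Lemma \ref{lkwioi} (which is signature-independent) with the wave operator replaced by the Laplace--Beltrami operator, and notes that the side conditions on $\chi$ are automatic in the positive-definite case. Your extra bookkeeping of the sign change and your observation that $\pl\chi^2>0$ needs no separate hypothesis are exactly the points the paper makes (the paper additionally invokes the absence of local extrema of harmonic functions to justify $(\pl_\al\chi)\ne0$).
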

\begin{proof}
Repeats the proof of Lemma \ref{lkwioi} which does not depend on the signature
of the metric, but now we obtain the Laplace--Beltrami equation. It is well
known that Eq.~(\ref{amnfhp}) has many solutions on a plane (harmonic
functions), and there is no local extremum. Therefore the requirements
$(\pl_\al\chi)\ne0$ and $\pl\chi^2\ne0$ are fulfilled automatically for any
nontrivial (nonconstant) solution of Eq.~(\ref{amnfhp}).
\end{proof}
So, the most general commuting vector fields having properties (\ref{abbnsd})
have the same form (\ref{andytr}) as in the Lorentzian case. The only difference
is that now an arbitrary function $\chi$ satisfies the Laplace--Beltrami
equation (\ref{amnfhp}) instead of the wave equation (\ref{amnfht}).

To clarify the meaning of the harmonic function $\chi$ we consider the example.

{\bf Example 2.}
Let the initial metric be conformally flat
\begin{equation*}
  ds^2=\ex^{2\phi}(dx^2+dy^2)=\ex^{2\phi}dz d\bar z,
\end{equation*}
where $\phi(x,y)$ is a real valued function, we introduced complex coordinate
$z:=x+iy$, and the bar denotes complex conjugation. After the coordinate
transformation $x^\al\mapsto\tilde x^\al$ defined by function $\chi$ it is
\begin{equation*}
  ds^2=t^2(d\tilde x^2+d\tilde y^2)=t^2d\tilde zd\bar{\tilde z},
\end{equation*}
where $\tilde z:=\tilde x+i\tilde y$ and
\begin{equation*}
  t^2=h_{\al\bt}t^\al t^\bt=\frac1{\pl\chi^2}.
\end{equation*}
The Laplace--Beltrami equation for conformally flat metric reduces to the
Laplace equation
\begin{equation*}
  \pl_z\pl_{\bar z}\chi=0.
\end{equation*}
Its general real valued solution is
\begin{equation*}
  \chi=w(z)+\bar w(\bar z),
\end{equation*}
where $w(z)$ is an arbitrary holomorphic function. Therefore
\begin{equation*}
  \pl\chi^2:=h^{\al\bt}\pl_\al\chi\pl_\bt\chi=\ex^{-2\phi}
  \pl_z(w+\bar w)\pl_{\bar z}(w+\bar w)=\ex^{-2\phi}\pl_z w\pl_{\bar z}\bar w,
\end{equation*}
and the metric is
\begin{equation*}
  ds^2=\ex^{2\phi}\pl_w z\pl_{\bar w}\bar z\, dwd\bar w.
\end{equation*}
Thus the transformation of coordinates defined by the harmonic function $\chi$
coincides with the conformal transformation $z\mapsto w(z)$.
\qed

The existence of vector fields $t$ and $s$ provides sufficient conditions for
the existence of the conformal gauge on the whole Euclidean plane $\MR^2$ for
metrics separated from zero and infinity (\ref{abnshg}, and analog of theorem
\ref{tdgefr} holds.

The Euclidean version of string theory is used in the path integral formulation
of quantum string theory, which assumes summation over Riemannian surfaces of
different genera. The Riemannian surfaces cannot be covered by a single
coordinate chart, and therefore we cannot talk about the conformal gauge on the
whole Riemannian surface. The results of the present section guarantee the
existence of the conformal gauge on the whole coordinate chart which is
diffeomorphic to $\MR^2$. Previous theorems provide sufficient conditions for
the existence of the conformal gauge only in some sufficiently small
neighbourhood of each point of the manifold.
\section{Conclusion}
It was assumed for many years that there exists the global conformal gauge in
string theory though this statement was proved only locally. In fact, almost all
results were obtained under validity of this assumption which turns out to be
true quite unexpectedly at least to the author. We proved the global existence
of the conformal gauge for infinite, open, and closed strings. The transition
from local to global statement is based
on the global existence of the solution of the Cauchy problem for a
two-dimensional hyperbolic equation with varying coefficients \cite{Hadama32}
and is far from being obvious.

As a byproduct, we proved global existence of the conformal gauge for a general
two-dimensional Lorentzian metric defined on the whole plane $\MR^2$ which is
not necessarily induced by an embedding and is well known locally for a long
time (see, e.g.\ \cite{Petrov61,Vladim71}).

The existence theorem is also proved for a Riemannian positive definite metric
defined on the whole Euclidean plane. It generalizes previous results providing
the existence of the conformal gauge in some sufficiently small neighbourhood
of each point.

\end{document}